\newtheorem{theorem}{Theorem}
\newtheorem{lemma}[theorem]{Lemma}
\theoremstyle{definition}
\newtheorem{definition}[theorem]{Definition}
\newcommand{\defproblemu}[4]{
  \vspace{1mm}
\noindent\fbox{
  \begin{minipage}{\textwidth}
  \begin{tabular*}{\textwidth}{@{\extracolsep{\fill}}lr} #1 & {\bf{Parameter:}} #3 \\ \end{tabular*}
  {\bf{Input:}} #2  \\
  {\bf{Question:}} #4
  \end{minipage}
  }
  \vspace{1mm}
}
\newcommand{\defproblemugoal}[4]{
  \vspace{1mm}
\noindent\fbox{
  \begin{minipage}{\textwidth}
  \begin{tabular*}{\textwidth}{@{\extracolsep{\fill}}lr} #1 & {\bf{Parameter:}} #3 \\ \end{tabular*}
  {\bf{Input:}} #2  \\
  {\bf{Goal:}} #4
  \end{minipage}
  }
  \vspace{1mm}
}
\newcommand{\gfvs}{{\sc{Group Feedback Vertex Set}}\xspace}
\newcommand{\gfvsshort}{{\sc{GFVS}}\xspace}
\newcommand{\compgfvs}{{\sc{Compression Group Feedback Vertex Set}}\xspace}
\newcommand{\cgfvsshort}{{\sc{C-GFVS}}\xspace}
\newcommand{\fvsname}{{\sc{Feedback Vertex Set}}}
\newcommand{\sfvs}{{\sc{Subset Feedback Vertex Set}}\xspace}
\newcommand{\esfvs}{{\sc{Edge Subset Feedback Vertex Set}}\xspace}
\newcommand{\esfvsshort}{{\sc{ESFVS}}\xspace}
\newcommand{\oct}{{\sc{Odd Cycle Transversal}}\xspace}
\newcommand{\octshort}{{\sc{OCT}}\xspace}
\newcommand{\fvsshort}{{\sc{FVS}}}
\newcommand{\mwc}{{\sc{Multiway Cut}}\xspace}
\newcommand{\mwcshort}{{\sc{MWC}}\xspace}
\newcommand{\asatname}{{\sc{Almost 2-SAT}}\xspace}
\newcommand{\Ohstar}{\ensuremath{O^\ast}}
\begin{document}

  \date{}

\author{
  Marek Cygan 
  \thanks{
    Institute of Informatics, University of Warsaw, Poland,
      \texttt{cygan@mimuw.edu.pl}
  }
  \and
  Marcin Pilipczuk
  \thanks{
    Institute of Informatics, University of Warsaw, Poland,
      \texttt{malcin@mimuw.edu.pl}
  }
  \and
  Micha\l{} Pilipczuk 
  \thanks{
    Department of Informatics, University of Bergen, Norway, \texttt{michal.pilipczuk@ii.uib.no}
  }
}

  \title{On group feedback vertex set parameterized by the size of the cutset}

  \maketitle

\begin{abstract}
We study the parameterized complexity of
a robust generalization of the classical \fvsname{} problem,
namely the \gfvs{} problem; we are given a graph $G$ with edges labeled
with group elements, and the goal is to compute the smallest set of vertices that
hits all cycles of $G$ that evaluate to a non-null element of the group.
This problem generalizes not only \fvsname{}, but also \sfvs, \mwc{} and \oct{}.
Completing the results of Guillemot [Discr. Opt. 2011], we provide a fixed-parameter
algorithm for the parameterization by the size of the cutset only.
Our algorithm works even if the group is given as a polynomial-time oracle.
\end{abstract}

\section{Introduction}

The parameterized complexity is an approach for tackling NP-hard problems by designing algorithms that perform well, when the instance is in some sense simple; its difficulty is measured by an integer, called the {\emph{parameter}}, additionally appended to the input. Formally, we say that a problem is {\emph{fixed-parameter tractable}} (FPT), if it admits an algorithm that given input of length $n$ and parameter $k$, resolves the task in time $f(k)n^c$, where $f$ is some computable function and $c$ is a constant independent of the parameter.

The search for fixed-parameter algorithms led to the development of a number of new techniques and gave valuable insight into structures of many classes of NP-hard problems. Among them, there is a family of so-called {\emph{graph cut}} problems, where the goal is to delete as few as possible edges or vertices (depending on the variant) in order to make a graph satisfy a global separation requirement. This class is perhaps best represented by the classical \fvsname{} problem (\fvsshort{}) where, given an undirected graph $G$, we seek for a minimum set of vertices
that hits all cycles of $G$. Another examples are \mwc{} (\mwcshort{}: separate each pair from a given set of terminals in a graph
with a minimum cutset) or \oct{} (\octshort{}: make a graph bipartite by a minimum number
of vertex deletions).

The research on the aforementioned problems had a great impact on the development
of parameterized complexity. 
The long line of research concerning parameterized algorithms for \fvsshort{} 
contains 
\cite{fvs:4krand,fvs1,fvs:3.83k,fvs:5k,fvs7,fvs2,fvs3,guo:fvs,fvs6,fvs5},
leading to an algorithm working in $3^k n^{O(1)}$ time \cite{fvs:3k}.
The search for a polynomial kernel for \fvsshort{} lead to surprising applications
of deep combinatorial results such as the Gallai's theorem \cite{fvs:quadratic-kernel},
which has also been found useful in designing FPT algorithms \cite{sfvs}.
While investigating the graph cut problems such as \mwcshort{}, M\'arx \cite{marx:cut}
introduced the {\em{important separator}} technique,
which turned out to be very robust, and is now the key ingredient
in parameterized algorithms for various problems such as variants
of \fvsshort{}~\cite{directed-fvs,sfvs} or {\sc{Almost 2-SAT}} \cite{almost2sat-fpt}.
Moreover, the recent developments on \mwcshort{} show applicability of
linear programming in parameterized complexity, leading to the fastest currently known
algorithms not only for \mwcshort{}, but also \asatname{} and \octshort{}
\cite{nmc:2k,saket:lp}. Last but not least, the research on the \octshort{} problem resulted in the introduction of iterative compression, a simple yet powerful technique for designing parameterized algorithms \cite{reed:ic}.

\paragraph{Considered problem.} In this paper we study a robust generalization of the \fvsshort{} problem, namely \gfvs{}\footnote{In this paper, we follow the notation of Guillemot \cite{guillemot-journal}.}.
Let $\Sigma$ be a finite (not necessarily abelian) group, with unit element $1_\Sigma$.
We use the multiplicative convention for denoting the group operation.

\begin{definition}
For a finite group $\Sigma$, a directed graph $G=(V,A)$ and
a labeling function $\Lambda:A \rightarrow \Sigma$, we call $(G,\Lambda)$ 
a {\em $\Sigma$-labeled} graph iff for each arc $(u,v) \in A$ we have $(v,u) \in A$
and $\Lambda((u,v)) = \Lambda((v,u))^{-1}$.
\end{definition}

We somehow abuse the notation and by $(G\setminus X, \Lambda)$ denote
the $\Sigma$-labeled graph $(G, \Lambda)$ with vertices of $X$ removed,
even though formally $\Lambda$ has in its domain arcs that do not exist in $G\setminus X$.

For a path $P=(v_1,\ldots,v_{\ell})$ we denote $\Lambda(P)=\Lambda((v_1,v_2)) \cdot \ldots \cdot \Lambda((v_{\ell-1},v_{\ell}))$.
Similarly, for a cycle $C=(v_1,\ldots,v_{\ell},v_1)$ we denote $\Lambda(C)=\Lambda((v_1,v_2)) \cdot \ldots \cdot \Lambda((v_{\ell-1},v_{\ell})) \cdot \Lambda((v_{\ell},v_1))$.
We call a cycle $C$ a {\em non-null} cycle, iff $\Lambda(C)\not= 1_\Sigma$.
Observe that if the group $\Sigma$ is non-abelian, then it may happen that cyclic shifts of the same cycle yield different elements of the group; nevertheless, the notion of a non-null cycle is well-defined, as either all of them are equal to $1_\Sigma$ or none of them.

\begin{lemma}
Let $(x_1,\ldots,x_\ell,x_1)$ be a cycle in a $\Sigma$-labeled graph $(G,\Lambda)$.
If $\Lambda((x_1,\ldots,x_\ell,x_1)) \not= 1_\Sigma$, then $\Lambda((x_2,\ldots,x_{\ell},x_1,x_2))\not= 1_\Sigma$.
\end{lemma}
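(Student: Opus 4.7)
The plan is to reduce the claim to the elementary group-theoretic fact that, in any group, $ab = 1_\Sigma$ if and only if $ba = 1_\Sigma$. The cycle label is by definition a product of $\ell$ group elements taken in cyclic order, so cyclically shifting the starting vertex simply conjugates this product; the null condition is thus shift-invariant. I would phrase this via the contrapositive to make the argument symmetric.

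More precisely, set $a := \Lambda((x_1,x_2))$ and let $b := \Lambda((x_2,x_3)) \cdot \Lambda((x_3,x_4)) \cdot \ldots \cdot \Lambda((x_{\ell-1},x_\ell)) \cdot \Lambda((x_\ell,x_1))$ be the product of labels along the remaining arcs. By the definition of $\Lambda$ on paths and cycles we immediately obtain
\[
\Lambda((x_1,\ldots,x_\ell,x_1)) = a \cdot b \qquad \text{and} \qquad \Lambda((x_2,\ldots,x_\ell,x_1,x_2)) = b \cdot a.
\]
So the statement to prove is: if $a \cdot b \neq 1_\Sigma$, then $b \cdot a \neq 1_\Sigma$.

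For the contrapositive, assume $b \cdot a = 1_\Sigma$. In the group $\Sigma$ this forces $a = b^{-1}$ (inverses in a group are unique), and consequently $a \cdot b = b^{-1} \cdot b = 1_\Sigma$. Equivalently, one may note that $a \cdot b = a \cdot (b \cdot a) \cdot a^{-1}$, so the two products are conjugate and in particular one is $1_\Sigma$ exactly when the other is. This finishes the argument.

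There is no real obstacle here; the lemma is purely formal and serves only to justify the remark made right before it, namely that the property of being a non-null cycle does not depend on the choice of the starting vertex of the cyclic listing. The only thing one has to be slightly careful about is that $\Sigma$ is not assumed to be abelian, which is precisely why the statement needs a proof (rather than being an immediate consequence of commutativity) and why the conjugation viewpoint is the right one.
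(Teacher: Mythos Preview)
Your proof is correct and essentially identical to the paper's: both set $g_1=\Lambda((x_1,x_2))$ and $g_2=\Lambda((x_2,\ldots,x_\ell,x_1))$ and invoke the basic group fact that $g_1 g_2 = 1_\Sigma$ iff $g_2 g_1 = 1_\Sigma$. Your additional remark about conjugation is a harmless elaboration of the same idea.
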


\begin{proof}
Let $g_1 = \Lambda((x_1,x_2))$ and $g_2=\Lambda((x_2,\ldots,x_{\ell},x_1))$. 
We have that $g_1 \cdot g_2 = 1_\Sigma$ iff $g_2 \cdot g_1 = 1_\Sigma$ and the lemma follows.
\end{proof}

In the \gfvs problem we want to hit all non-null cycles in a $\Sigma$-labeled graph using at most $k$ vertices.

\defproblemu{\gfvs (\gfvsshort{})}{A $\Sigma$-labeled graph $(G,\Lambda)$ and an integer $k$.}{$k$}{Does there exist a set $X \subseteq V(G)$ of at most $k$ vertices, such that there is no non-null cycle in $(G \setminus X, \Lambda)$?}

As observed in~\cite{guillemot-journal}, for a graph excluding a non-null cycle we can define a consistent labeling.

\begin{definition}
\label{def:consistent}
For a $\Sigma$-labeled graph $(G,\Lambda)$ we call $\lambda:V \rightarrow \Sigma$ a {\em consistent labeling} iff
for each arc $(u,v) = a \in A(G)$ we have $\lambda(v) = \lambda(u) \cdot \Lambda(a)$.
\end{definition}

\begin{lemma}[\cite{guillemot-journal}]
\label{lem:consistent}
A $\Sigma$-labeled graph $(G,\Lambda)$ has a consistent labeling iff it does not contain a non-null cycle.
\end{lemma}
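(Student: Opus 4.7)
My plan is to prove the two directions separately, with the interesting work in the backward implication.

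For the forward direction, assume a consistent labeling $\lambda$ exists, and pick any cycle $C=(v_1,\ldots,v_\ell,v_1)$. By Definition~\ref{def:consistent}, a straightforward induction on $i$ gives $\lambda(v_{i+1}) = \lambda(v_1)\cdot\Lambda((v_1,v_2))\cdots\Lambda((v_i,v_{i+1}))$, and applying the consistency condition once more along the closing arc $(v_\ell,v_1)$ yields $\lambda(v_1) = \lambda(v_1)\cdot\Lambda(C)$. Cancelling on the left forces $\Lambda(C)=1_\Sigma$, so $C$ is not non-null.

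For the converse, I would work component by component. Within each connected component of $G$, fix a spanning tree $T$ rooted at some vertex $r$, set $\lambda(r)=1_\Sigma$, and for every other vertex $v$ define $\lambda(v)=\Lambda(P_v)$, where $P_v$ is the unique tree path from $r$ to $v$. Consistency on tree arcs is immediate from the recursive definition. The remaining task is consistency on non-tree arcs $(u,v)$, which reduces to showing that the closed walk $P_u\cdot (u,v)\cdot P_v^{-1}$ evaluates to $1_\Sigma$, since then $\lambda(u)\cdot\Lambda((u,v))\cdot\lambda(v)^{-1}=1_\Sigma$, as required. Here the $\Sigma$-labeled graph condition $\Lambda((v,u))=\Lambda((u,v))^{-1}$ ensures that the reversal $P_v^{-1}$ has label $\lambda(v)^{-1}$.

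The technical heart is therefore the following claim: in a $\Sigma$-labeled graph with no non-null simple cycle, every closed walk $W$ has $\Lambda(W)=1_\Sigma$. I would prove this by induction on the length of $W$. If $W$ contains an immediate reversal $(\ldots,u,v,u,\ldots)$, the two labels $\Lambda((u,v))\cdot\Lambda((v,u))$ collapse to $1_\Sigma$ by the definition of a $\Sigma$-labeled graph, and the resulting shorter closed walk has the same overall label. Otherwise, take the first index $j$ at which a repeated vertex appears, i.e.\ $v_j=v_i$ for some minimal $j$ and some $i<j$; the subwalk from $v_i$ to $v_j$ has all internal vertices distinct and no immediate reversal, so it is a simple cycle, and by assumption and the preceding lemma its label is $1_\Sigma$. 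Excising this subwalk leaves a shorter closed walk of the same total label, to which the induction hypothesis applies.

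The main point to be careful about is that $\Sigma$ is not assumed abelian, so rewritings must respect the order of multiplication. This causes no real trouble: in the inductive step we always excise a factor equal to $1_\Sigma$, and inserting or removing an identity factor in the middle of a product preserves its value. Hence the closed-walk claim holds, the spanning-tree construction yields a consistent $\lambda$, and the equivalence follows.
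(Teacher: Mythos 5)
The paper does not prove this lemma at all --- it is imported from Guillemot~\cite{guillemot-journal} as a black box --- so there is no in-paper argument to compare against. Your proof is correct and is the standard one: the forward direction by telescoping the consistency condition around a cycle, and the converse via a spanning-tree labeling per component, reduced to the claim that every closed walk evaluates to $1_\Sigma$ when no simple cycle is non-null; your induction (collapsing immediate reversals, otherwise excising the first simple subcycle, which is null by hypothesis and by the paper's cyclic-shift lemma) handles the non-abelian bookkeeping correctly, since you only ever delete factors equal to $1_\Sigma$ from the middle of a product.
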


Note that when analyzing the complexity of the \gfvsshort{} problem, it is important how the group $\Sigma$ is represented.
In~\cite{guillemot-journal} it is assumed that $\Sigma$ is given via its multiplication table
as a part of the input.
In this paper we assume a more general model, where operations in $\Sigma$ are computed
by an oracle in polynomial time.
More precisely, we assume that the oracle can multiply two elements, return an inverse of an element, provide the neutral element $1_\Sigma$, or check whether two elements are equal.

As noted in \cite{stefan:new}, \gfvsshort{} subsumes not only the classical \fvsshort{} problem,
but also \octshort{} (with $\Sigma = \mathbb{Z}_2$) and \mwcshort{} (with $\Sigma$ being an arbitrary group of size not smaller than the number of terminals).
We note that if $\Sigma$ is given in the oracle model, \gfvs subsumes also \esfvs, which is equivalent to \sfvs~\cite{sfvs}.

\defproblemu{\esfvs (\esfvsshort{})}{An undirected graph $G$, a set $S \subseteq E(G)$ and an integer $k$.}{$k$}{Does there exist a set $X \subseteq V(G)$ of at most $k$ vertices,
  such that in $G \setminus X$ there are no cycles with at least one edge from $S$?}

\begin{lemma}\label{lem:sfvs}
Given an \esfvsshort{} instance $(G,S,k)$, one can in polynomial time construct an equivalent \gfvsshort{} instance $(G',\Lambda,k)$ with group $\Sigma = \mathbb{Z}_2^{|S|}$.
\end{lemma}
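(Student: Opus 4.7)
The plan is to encode membership in $S$ via distinct coordinates of the group $\Sigma = \Z_2^{|S|}$, in which every element is its own inverse. Fix a bijection between $S$ and the standard basis $(e_s)_{s \in S}$ of $\Z_2^{|S|}$, and form $G'$ as the standard bidirection of $G$: each undirected edge $\{u,v\}$ becomes two opposite arcs $(u,v)$ and $(v,u)$. Define $\Lambda((u,v)) = \Lambda((v,u)) = e_s$ for each $s = \{u,v\} \in S$, and $\Lambda((u,v)) = \Lambda((v,u)) = 0$ for each edge $\{u,v\} \notin S$. Since $-e_s = e_s$ in $\Z_2^{|S|}$, the antisymmetry condition $\Lambda((v,u)) = \Lambda((u,v))^{-1}$ is satisfied, so $(G',\Lambda)$ is a valid $\Sigma$-labeled graph. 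The parameter $k$ is passed through unchanged, and the construction is plainly polynomial.

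For correctness, fix any $X \subseteq V(G) = V(G')$ with $|X| \le k$. For any simple cycle $C = (v_1,\ldots,v_\ell,v_1)$ in $G' \setminus X$, summing labels along $C$ (in additive notation) yields
\[
\Lambda(C) \;=\; \sum_{s \in E(C) \cap S} e_s,
\]
which vanishes iff $E(C) \cap S = \emptyset$, because the $e_s$ are distinct basis vectors. Hence the simple cycles of $G' \setminus X$ that are non-null are precisely the simple cycles of $G \setminus X$ that contain an edge of $S$, giving the desired equivalence of the two instances.

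The one subtlety, which I expect to be the main thing to argue carefully, is that ``cycle'' in the \gfvsshort{} definition might be interpreted as an arbitrary closed walk rather than a simple cycle. Suppose $(G' \setminus X, \Lambda)$ admits a non-null closed walk $W$; the same label computation then forces some $s \in S$ to be traversed by $W$ an odd number of times. Let $H$ be the subgraph of $G \setminus X$ consisting of those undirected edges traversed by $W$ an odd number of times. A standard parity argument at each vertex (the total number of arc-traversals of $W$ incident with any vertex is even, so the count of edges with odd multiplicity at that vertex is also even) shows that every vertex of $H$ has even degree; thus $H$ decomposes into edge-disjoint simple cycles, one of which contains $s$ and is therefore a simple cycle of $G \setminus X$ using an edge of $S$. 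The equivalence persists in this interpretation as well, and nothing else beyond this parity step requires care.
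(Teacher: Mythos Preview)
Your proposal is correct and uses essentially the same construction as the paper: bidirect $G$ and label the edges of $S$ with linearly independent elements of $\mathbb{Z}_2^{|S|}$ (you pick the standard basis, the paper says ``any $|S|$ linearly independent values''). The paper's proof in fact omits the equivalence argument that you spell out; your extra closed-walk discussion is correct but unnecessary, since by Lemma~\ref{lem:consistent} the absence of non-null simple cycles is equivalent to the existence of a consistent labeling, which in turn forces every closed walk to be null.
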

\begin{proof}
To construct the new \gfvsshort instance, create the graph $G'$ by replacing each edge of $G$ with arcs in both direction,
 keep the parameter $k$, take $\Sigma = \mathbb{Z}_2^{|S|}$ and construct a $\Sigma$-labeling $\Lambda$ by setting
 any $|S|$ linearly independent values of $\Lambda((u,v))$ for $uv \in S$ and $\Lambda((u,v)) = 1_\Sigma$ for $uv \notin S$.
Clearly, this construction can be done in polynomial time and the operations on the group $\Sigma$ can be performed
by a polynomial-time oracle.
\end{proof}

We note that the \gfvs problem was also studied from the graph theoretical point of view, as, in addition to the aforementioned
reductions, it also subsumes the setting of Mader's $\mathcal{S}$-paths theorem \cite{maria:gfvs,kebab:gfvs}.
In particular, Kawarabayashi and Wollan proved the Erd\"{o}s-P\'{o}sa property for non-null cycles in highly connected graphs,
generalizing a list of previous results \cite{kebab:gfvs}.

The study of parameterized complexity of \gfvsshort{} was initiated by Guillemot \cite{guillemot-journal}, who presented
a fixed-parameter algorithm for \gfvsshort{} parameterized by $|\Sigma|+k$
running in time\footnote{The $\Ohstar()$ notation suppresses terms polynomial in the input size.}
$\Ohstar(2^{O(k\log |\Sigma|)})$.
When parameterized by $k$, Guillemot showed a fixed-parameter algorithm for
the easier edge-deletion variant of \gfvsshort{}, running in time $\Ohstar(2^{O(k \log k)})$.
Very recently, Kratsch and Wahlstr\"{o}m presented a randomized kernelization algorithm
that reduces the size of a \gfvsshort{} instance to $O(k^{2|\Sigma|})$ \cite{stefan:new}.

The main purpose of studying the \gfvsshort{} problem is to find the common points in the fixed-parameter algorithms for problems it generalizes. Precisely this approach has been presented by Guillemot in \cite{guillemot-journal}, where at the base of the algorithm lies a subroutine that solves a very general version of \mwc{}. When reducing various graph cut problems to \gfvsshort{}, usually the size of the group depends on the number of distinguished vertices in the instance, as in Lemma~\ref{lem:sfvs}. Hence, the usage of the general $\Ohstar(2^{O(k\log |\Sigma|)})$ algorithm of Guillemot unfortunately incorporates this parameter in the running time. It appears that by a more refined combinatorial analysis, usually one can get rid of this dependence; this is the case both in \sfvs~\cite{sfvs} and in \mwc~\cite{nmc:2k,saket:lp}. This suggests that the phenomenon can be, in fact, more general.

\paragraph{Our result and techniques.}
Our main result is a fixed-parameter algorithm for \gfvsshort{} parameterized by the
size of the cutset only.
\begin{theorem}
\label{thm:main}
\gfvs can be solved in $\Ohstar(2^{O(k\log k)})$ time and polynomial space.
\end{theorem}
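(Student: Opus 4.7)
The plan is to combine iterative compression with a branching strategy that exploits the consistent labeling guaranteed by Lemma~\ref{lem:consistent}, in a way that avoids any enumeration over $\Sigma$ and thereby any dependence on $|\Sigma|$. First I apply iterative compression in the standard manner: vertices of $G$ are added one by one to a subgraph for which a $\Sigma$-feedback vertex set $Z$ of size at most $k+1$ is maintained, and after each addition one attempts to shrink $Z$ back to size $k$ via a compression subroutine. Iterating over the $2^{|Z|}=2^{O(k)}$ possible intersections of the sought solution with $Z$ reduces the problem to a \emph{disjoint compression} instance: a $\Sigma$-labeled graph $(G,\Lambda)$ with a set $W\subseteq V(G)$, $|W|\leq k+1$, such that $(G\setminus W,\Lambda)$ contains no non-null cycle, and a budget $k'\leq k$; the task is to find $X\subseteq V(G)\setminus W$ of size at most $k'$ whose removal kills all non-null cycles.

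For the disjoint compression instance, Lemma~\ref{lem:consistent} gives a consistent labeling $\lambda$ on $V(G)\setminus W$ in polynomial time. For every $w\in W$ and every neighbor $v\in V(G)\setminus W$ of $w$ I define the \emph{incident value} $\phi_w(v)=\lambda(v)\cdot\Lambda((v,w))$; this is precisely the value that $\lambda(w)$ is forced to take in order for the single arc $(v,w)$ to remain consistent. Applying Lemma~\ref{lem:consistent} once more to the target graph $G\setminus X$ one sees that $(G\setminus X,\Lambda)$ is null-cycle-free if and only if $\lambda$ extends to $W$ in such a way that (i) every surviving neighbor $v\notin X$ of every $w\in W$ satisfies $\phi_w(v)=\lambda(w)$, and (ii) for every arc $(w,w')$ with both endpoints in $W$, $\lambda(w')=\lambda(w)\cdot\Lambda((w,w'))$. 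In particular, once the restriction $\lambda|_W$ is fixed, the set $X$ is \emph{forced}: it must consist exactly of those neighbors of $W$ in $V(G)\setminus W$ whose incident values disagree with the chosen labels. The algorithm therefore enumerates admissible labelings $\lambda|_W$ and for each checks whether the forced $X$ satisfies $|X|\leq k'$.

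The key combinatorial observation, which replaces Guillemot's $|\Sigma|$-dependence by a $k$-dependence, is that for every $w\in W$ at most $k'+1$ distinct values of $\phi_w$ may appear in a feasible instance, since otherwise all but one of the incident classes would have to be placed entirely into $X$, exceeding the budget. Thus each $w$ admits at most $O(k)$ candidate labels (one per observed class, together with a wildcard option valid when all neighbors of $w$ end up in $X$), condition (ii) couples the choices along the edges of $G[W]$, and the overall enumeration runs in $k^{O(|W|)}=k^{O(k)}=2^{O(k\log k)}$ time, with each branch resolved in polynomial time. The main technical obstacle is to handle cleanly the multi-component aspects: $V(G)\setminus W$ may split into many components, each contributing an independent left-multiplication freedom in the choice of $\lambda$, with further splitting induced by removing $X$. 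Conditions (i) and (ii) must then be stated per component and the enumeration augmented over a bounded set of consistent component representatives compatible with the class choices at every $w$ they touch, all still within the $2^{O(k\log k)}$ budget. Polynomial space is immediate, because the whole algorithm is a bounded-depth recursion traversed in DFS order, at any time storing only an $O(k)$-sized partial guess together with a working copy of the graph.
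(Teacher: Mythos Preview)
Your central combinatorial claim---that in a feasible disjoint-compression instance each $w\in W$ sees at most $k'+1$ distinct incident values $\phi_w(v)$---is false, and with it both the bound on the number of labelings and the assertion that $X$ is ``forced'' to consist of disagreeing neighbours collapse. Take $W=\{w\}$, let $G\setminus W$ be a star with centre $c$ and leaves $\ell_1,\ldots,\ell_m$ (all star arcs labeled $1_\Sigma$), and connect $w$ only to the leaves, with pairwise distinct labels $g_1,\ldots,g_m$. Under the consistent labeling $\lambda\equiv 1_\Sigma$ on $G\setminus W$ one has $\phi_w(\ell_i)=g_i$, so there are $m$ distinct incident values. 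Yet $X=\{c\}$ is a solution of size one: after removing $c$ each leaf is its own component and may be relabeled independently (set $\lambda(\ell_i)=g_i^{-1}$ and $\lambda(w)=1_\Sigma$). Your algorithm would try each $\lambda(w)\in\{g_1,\ldots,g_m\}$, compute a ``forced'' $X$ of size $m-1$ in every branch, and wrongly reject. The freedom you flag only at the end as a ``technical obstacle''---independent relabeling of the components of $G\setminus(W\cup X)$---is not a detail to be patched; it is exactly what invalidates both the incident-value bound and the claim that $X$ is determined by $\lambda|_W$.

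The paper resolves precisely this difficulty with two ingredients you are missing. First, bounding the relevant label set requires nontrivial preprocessing: one removes every vertex $v$ that admits $k{+}2$ internally vertex-disjoint paths to the terminal set in an auxiliary flow graph (such $v$ lies in every solution), and only afterwards proves that for each pair $z_1,z_2\in Z$ the set of labels of external $z_1$--$z_2$ paths has size polynomial in $k$. This is a statement about path labels between \emph{pairs} in $Z$, not about arcs incident to a single vertex, and it is used to bound the number of spanning-forest-consistent labelings of $Z$. Second, even after a labeling of $Z$ is fixed, the optimal $X$ is not forced: the paper reduces this step to \mwc{}, which is exactly the problem of deleting the fewest internal vertices so that neighbours forced to carry different labels become separated. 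Without both pieces your argument does not go through.
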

Our algorithm uses a similar approach as described by Kratsch and Wahlstr\"{o}m
in \cite{stefan:new}: in each step of iterative compression, when we are given
a solution $Z$ of size $k+1$, we guess the values of a consistent labeling on the
vertices of $Z$, and reduce the problem to \mwc{}. However, by a straightforward
application of this approach we obtain $\Ohstar(2^{O(k \log |\Sigma|)})$ time complexity.
To reduce the dependency on $|\Sigma|$, we carefully analyze the structure of a solution,
provide a few reduction rules in a spirit of the ones used in the recent algorithm
for \sfvs{} \cite{sfvs} and, finally, for each vertex of $Z$
we reduce the number of choices for a value of a consistent labeling to polynomial in $k$.
Therefore, the number of reasonable consistent labelings of $Z$ is bounded
by $2^{O(k \log k)}$ and we can afford solving a \mwc{} instance for each such labeling. 

Note that the bound on the running time of our algorithm matches the currently best known algorithm for \sfvs{} \cite{sfvs}. Therefore, we obtain the same running time as in \cite{sfvs} by applying a much more general framework.

In the \gfvs problem definition in~\cite{guillemot-journal}
a set of forbidden vertices $F \subseteq V(G)$
is additionally given as a part of the input.
Observe that one can easily gadget such vertices by replacing each forbidden vertex by a clique
of size $k+1$ labeled with $1_\Sigma$; therefore, for the sake of simplicity we assume
that all the vertices are allowed.

\section{Preliminaries}

\paragraph{Notation.} We use standard graph notation.
For a graph~$G$, by~$V(G)$ and~$E(G)$ we denote its vertex and edge sets, respectively.
In case of a directed graph $G$, we denote the arc set of $G$ by $A(G)$.
For~$v \in V(G)$, its neighborhood~$N_G(v)$ is defined as~$N_G(v) = \{u: uv\in E(G)\}$, and~$N_G[v] = N_G(v) \cup \{v\}$ is the closed neighborhood of~$v$.
We extend this notation to subsets of vertices:~$N_G[X] = \bigcup_{v \in X} N_G[v]$ and~$N_G(X) = N_G[X] \setminus X$.
For a set~$X \subseteq V(G)$ by~$G[X]$ we denote the subgraph of~$G$ induced by~$X$.
For a set~$X$ of vertices or edges of~$G$, by~$G \setminus X$ we denote the graph
with the vertices or edges of~$X$ removed; in case of vertex removal, we remove
also all the incident edges.



\section{Algorithm}

In this section we prove Theorem \ref{thm:main}.
We proceed with a standard application of the iterative compression technique
in Section \ref{sec:alg:1}.
In each step of the iterative compression, we solve a \compgfvs{} problem, where
we are given a solution $Z$ of size a bit too large --- $k+1$ --- and we are to find a new
solution disjoint with it.
We first prepare the \compgfvs{} instance by {\em{untangling}} it in Section \ref{sec:alg:2},
  in the same manner as it is done in the kernelization algorithm of \cite{stefan:new}.
The main step of the algorithm is done in Section \ref{sec:alg:3},
 where we provide a set of reduction rules that enable us for each vertex
$v \in Z$ to limit the number of choices for a value of a consistent labeling on $v$
to polynomial in $k$. Finally, we iterate over all $\Ohstar(2^{O(k \log k)})$
remaining labelings of $Z$ and, for each labeling, reduce the instance to
\mwc{} (Section \ref{sec:alg:4}).

\subsection{Iterative compression}
\label{sec:alg:1}

The first step in the proof of Theorem~\ref{thm:main} is a standard technique in the design
of parameterized algorithms, that is, iterative compression, introduced by Reed et al.~\cite{reed:ic}.
Iterative compassion was also the first step of the parameterized algorithm for \sfvs~\cite{sfvs}.

We define a {\em compression problem}, where the input additionally contains a feasible 
solution $Z \subseteq V(G)$, and we are asked whether there exists a solution of size at most $k$
which is disjoint with $Z$. 

\defproblemugoal{\compgfvs (\cgfvsshort{})}{A $\Sigma$-labeled graph $(G,\Lambda)$, an integer $k$ and a set $Z \subseteq V(G)$, such that $(G\setminus Z, \Lambda)$ has no non-null cycle.}{$k+|Z|$}{Find a set $X \subseteq V(G) \setminus Z$ of at most $k$ vertices, such that there is no non-null cycle in $(G \setminus X, \Lambda)$ or return NO, if such a set does not exist.}

In Section~\ref{sec:alg:2} we prove the following lemma providing a parameterized algorithm for \compgfvs.

\begin{lemma}
\label{lem:compression}
\compgfvs can be solved in $\Ohstar(2^{O(|Z|(\log k + \log |Z|))} \cdot 2^k)$ time and polynomial space.
\end{lemma}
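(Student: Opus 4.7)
The plan is to use Lemma~\ref{lem:consistent} and split the search for a solution into a guessing phase followed by a cut subproblem. By assumption $G\setminus Z$ admits a consistent labeling $\lambda_0$; any hypothetical solution $X$ disjoint from $Z$ produces a consistent labeling $\lambda$ of $G\setminus X$, and its restriction $\lambda_Z := \lambda|_Z$ together with $\lambda_0$ fully determines which paths in $G\setminus Z$ must be cut to make the labels globally agree. The algorithm therefore enumerates a small family of candidate labelings $\lambda_Z:Z\to\Sigma$ and, for each, decides feasibility by a reduction to \mwc{}.

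As a first step (Section~\ref{sec:alg:2}), I would untangle the instance in the manner of Kratsch and Wahlstr\"om~\cite{stefan:new}: the graph is preprocessed so that every connected component $C$ of $G\setminus Z$ imposes essentially a single group-valued constraint on $\lambda_Z$, rather than many interacting ones. This is achieved by shifting labels inside each component along $\lambda_0$ and introducing small gadgets, and in the oracle model can be performed using only polynomially many group operations.

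The crux of the argument is Section~\ref{sec:alg:3}, where one shows that only $(k+|Z|)^{O(1)}$ values of $\lambda_Z(v)$ per vertex $v \in Z$ can be relevant, so that the total number of candidate labelings is bounded by $2^{O(|Z|(\log k+\log|Z|))}$. I would achieve this through a sequence of reduction rules in the spirit of the ones used for \sfvs{} in~\cite{sfvs}: rules that delete or contract untangled components carrying redundant constraints, clean up low-degree attachments to $Z$, and identify pairs $u,v\in Z$ for which the relative value $\lambda_Z(u)\lambda_Z(v)^{-1}$ is forced by so many internally disjoint components that no $k$-vertex cut can overrule them. I expect this to be the main obstacle of the entire proof: one has to show that, after exhaustive application of these rules, every ``undetermined'' alternative at a single vertex of $Z$ corresponds to a distinct structural witness that a size-$k$ solution could plausibly exploit, so that the number of alternatives is polynomial in $k$ and $|Z|$.

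Finally, for each fixed $\lambda_Z$ (Section~\ref{sec:alg:4}) I would construct a \mwc{} instance by introducing, in each untangled component of $G\setminus Z$, a terminal for every value prescribed by $\lambda_Z$ on its boundary, and requiring that every remaining path between differently-labelled terminals be hit; by the properties guaranteed in the untangling step, the \mwc{} solutions are in bijection with the sets $X$ consistent with the chosen $\lambda_Z$. Applying the $\Ohstar(2^k)$ algorithm for \mwc{} from~\cite{nmc:2k} gives the claimed total running time $\Ohstar(2^{O(|Z|(\log k+\log|Z|))}\cdot 2^k)$, and polynomial space follows because both the outer enumeration over labelings and the \mwc{} subroutine can be executed in polynomial working memory.
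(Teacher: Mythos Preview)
Your three-phase plan---untangle so that arcs inside $G\setminus Z$ carry $1_\Sigma$, enumerate a bounded family of labelings $\phi:Z\to\Sigma$, and for each fixed $\phi$ reduce to \mwc{}---is exactly the paper's route; the untangling needs no gadgets, only the relabeling shift via $\lambda_0$ that you already describe. The one place your sketch diverges is the mechanism for bounding the labelings: instead of SFVS-style rules that contract components or force relative values, the paper (i) removes any $v\in V\setminus Z$ admitting $k{+}2$ internally disjoint paths to distinct labels in a flow graph (Lemma~\ref{red:1}), (ii) shows that thereafter each set $\Sigma(z_1,z_2)$ of labels realizable on an external $z_1$--$z_2$ path has size $O(k^5)$, else the instance is infeasible (Lemma~\ref{lem:many_paths}), and (iii) enumerates $\phi$ by guessing a spanning forest on $Z$ and, along each of its at most $|Z|-1$ edges, one element of the corresponding $\Sigma(z_1,z_2)$. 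This gives the $2^{O(|Z|(\log k+\log|Z|))}$ bound directly, without per-vertex value lists; note in particular that bounding values of $\phi$ at individual vertices is not meaningful without first fixing a representative per component, which the spanning-forest enumeration handles implicitly.
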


Armed with the aforementioned result, we can easily prove Theorem~\ref{thm:main}.

\begin{proof}[Proof of Theorem~\ref{thm:main}]
In the iterative compression approach we start with an empty solution for an empty graph, and in each of the $n$ steps we add a single vertex both to a feasible solution and to the graph; we use Lemma~\ref{lem:compression} to compress the feasible solution after guessing which vertices of the solution of size at most $k+1$ should not be removed.

Formally, for a given instance $(G=(V,A),\Lambda,k)$ let $V = \{v_1, \ldots, v_n\}$.
For $0 \le i \le n$ define $V_i = \{v_1,\ldots,v_i\}$ (in particular $V_0=\emptyset$)
and let $\Lambda_i$ be the function $\Lambda$ restricted to the set
of arcs $A_i=\{(u,v) \in A : u,v \in V_i\}$.
Initially we set $X_0=\emptyset$, which is a solution to the graph $(G[V_0],\Lambda_0)$.
For each $i=1,\ldots,n$ we set $Z_i=X_{i-1} \cup \{v_i\}$, which is a feasible solution
to $(G[V_i],\Lambda_i)$ of size at most $k+1$.
If $|Z_i| \le k$, then we set $X_i=Z_i$ and continue the inductive process.
Otherwise, if $|Z_i| = k+1$, we guess by trying all possibilities, a subset of vertices
$Z_i' \subseteq Z_i$ that is not removed in a solution of size $k$ to $(G[V_i],\Lambda_i)$
and use Lemma~\ref{lem:compression} for the instance $I_{Z_i'}=(G[V_i\setminus (Z_i \setminus Z_i')],\Lambda_i,k'=|Z_i'|-1,Z_i')$.
If for each set $Z_i'$ the algorithm from Lemma~\ref{lem:compression} returns NO, 
then there is no solution for $(G[V_i],\Lambda_i)$ and, consequently,
there is no solution for $(G,\Lambda)$.
However, if for some $Z_i'$ the algorithm from Lemma~\ref{lem:compression}
returns a set $X_i'$ of size smaller than $|Z_i'|$, then we set $X_i = (Z_i \setminus Z_i') \cup X_i'$.
Since $|X_i| = |Z_i \setminus Z_i'| + |X_i'| < |Z_i| = k+1$,
the set $X_i$ is a solution of size at most $k$ for the instance $(G_i,\Lambda_i)$.

Finally, we observe that since $(G_n,\Lambda_n)=(G,\Lambda)$, the set $X_n$
is a solution for the initial instance $(G=(V,A),\Lambda,k)$ of \gfvs. The claimed bound on running time follows from the observation that $|Z_i|\leq k+1$ for each of polynomially many steps.
\end{proof}

At this point a reader might wonder why we do not add an assumption $|Z| \le k+1$ to the \cgfvsshort problem
definition and parameterize the problem solely by $k$. 
The reason for this is that in Section~\ref{sec:alg:3} we will solve the \cgfvsshort problem
recursively, sometimes decreasing the value of $k$ without decreasing the size of $Z$,
and to always work with a feasible instance of the \cgfvsshort problem we avoid adding the $|Z| \le k+1$
assumption to the problem definition.

\subsection{Untangling}
\label{sec:alg:2}

In order to prove Lemma~\ref{lem:compression} we use the concept of {\em untangling},
previously used by Kratsch and Wahlstr\"om~\cite{stefan:new}.
We transform an instance of \cgfvsshort to ensure that each arc $(u,v)$
with both endpoints in $V(G) \setminus Z$ is labeled $1_\Sigma$ by $\Lambda$.

\begin{definition}
We call an instance $(G=(V,A),\Lambda,k,Z)$ of \cgfvsshort\ {\em untangled}, iff for each arc $(u,v) \in A$ such that $u,v \in V\setminus Z$
we have $\Lambda((u,v))=1_\Sigma$.

Moreover, by {\em untangling} a labeling $\Lambda$ around vertex $v$ with a group element $g$ we mean changing the labeling to $\Lambda':A \rightarrow \Sigma$, such that for $(u,v) = a \in A$, we have
$$\Lambda'(a)=
\left\{ \begin{array}{ll}
   g \cdot \Lambda(a) & {\textrm{if }} u = x;\\
   \Lambda(a) \cdot g^{-1}  & {\textrm{if }} v = x; \\
   \Lambda(a) & {\textrm{otherwise}}.
\end{array} \right. $$
\end{definition}

\begin{lemma}
\label{lem:relabel}
Let $(G=(V,A),\Lambda)$ be a $\Sigma$-labeled graph, $x \in V$ be a vertex of $G$ and let $g \in \Sigma$ be a group element.
For any subset of vertices $X \subseteq V$ the graph $(G\setminus X,\Lambda)$ contains a non-null cycle iff $(G\setminus X,\Lambda')$ contains a non-null cycle, where $\Lambda'$ is
the labeling $\Lambda$ untangled around the vertex $x$ with a group element~$g$.
\end{lemma}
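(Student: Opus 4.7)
The plan is to show that for every cycle $C$ in $G\setminus X$, the element $\Lambda'(C)$ equals $1_\Sigma$ iff $\Lambda(C)$ does. The key observation is that $\Lambda$ and $\Lambda'$ differ only on arcs incident to $x$, and the modification amounts to conjugation by $g$ of the total label of any cycle through $x$. Since conjugation is a group automorphism, it preserves the property of being the identity, so non-nullness is preserved.

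First I would dispose of the trivial cases. If $x\in X$, then no arc of $G\setminus X$ is incident to $x$, so $\Lambda$ and $\Lambda'$ agree on every arc of $G\setminus X$ and the statement is immediate. Similarly, if $C$ is a cycle in $G\setminus X$ that does not visit $x$, then $\Lambda'(C)=\Lambda(C)$.

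The only interesting case is when $x\notin X$ and $C$ passes through $x$. Using the earlier lemma on cyclic shifts, I may assume without loss of generality that $C=(x,v_2,\ldots,v_\ell,x)$, i.e., $x$ is the designated starting vertex of $C$. Since $C$ is a simple cycle, $x$ occurs only at this position, so the only arcs of $C$ whose label is changed by the untangling are the first arc $(x,v_2)$ and the last arc $(v_\ell,x)$. The definition of untangling gives $\Lambda'((x,v_2))=g\cdot \Lambda((x,v_2))$ and $\Lambda'((v_\ell,x))=\Lambda((v_\ell,x))\cdot g^{-1}$, while $\Lambda'$ coincides with $\Lambda$ on the intermediate arcs. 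Plugging these into the definition of $\Lambda'(C)$, all middle factors cancel telescopically with the original labels and one obtains
\[
\Lambda'(C)\;=\;g\cdot \Lambda(C)\cdot g^{-1}.
\]
Since the map $h\mapsto g\cdot h\cdot g^{-1}$ is a bijection of $\Sigma$ fixing $1_\Sigma$, we conclude $\Lambda'(C)=1_\Sigma$ iff $\Lambda(C)=1_\Sigma$, and the earlier lemma on cyclic shifts justifies that this is independent of the chosen starting vertex.

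I do not expect any significant obstacle; the argument is a short computation in the group. The only point requiring minor care is that $\Lambda'$ remains a legitimate labeling of a $\Sigma$-labeled graph, i.e.\ that $\Lambda'((u,v))=\Lambda'((v,u))^{-1}$ for every arc. This is verified on arcs incident to $x$ by $g\cdot \Lambda((x,v))\cdot \Lambda((v,x))\cdot g^{-1}=g\cdot g^{-1}=1_\Sigma$, and is trivial for other arcs. With this in hand, the biconditional in the statement follows by applying the above analysis to every cycle of $G\setminus X$.
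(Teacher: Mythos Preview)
Your proof is correct and follows essentially the same approach as the paper's one-line argument that $\Lambda(C)=\Lambda'(C)$ for every cycle $C$. Your version is in fact more careful: the paper's equality holds literally only when the designated start vertex of $C$ is not $x$, whereas you correctly observe that starting at $x$ yields $\Lambda'(C)=g\cdot\Lambda(C)\cdot g^{-1}$, which still preserves non-nullness.
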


\begin{proof}
The lemma follows from the fact that for any cycle $C$ in $G$ we have $\Lambda(C)=\Lambda'(C)$.
\end{proof}

In Section~\ref{sec:alg:3} we prove the following lemma.

\begin{lemma}
\label{lem:untangled}
\compgfvs for untangled instances can be solved in $\Ohstar(2^{O(|Z| (\log k + \log |Z|))} \cdot 2^k)$ time and polynomial space.
\end{lemma}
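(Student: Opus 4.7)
The plan is to prove Lemma~\ref{lem:untangled} by a guess-and-verify strategy: enumerate candidate consistent labelings of $Z$, and for each candidate reduce the remaining cut problem to Node Multiway Cut, which is known to admit an $\Ohstar(2^k)$ time polynomial-space algorithm. The whole difficulty lies in ensuring the number of relevant guesses is $(k|Z|)^{O(|Z|)}=2^{O(|Z|(\log k+\log|Z|))}$, rather than $|\Sigma|^{|Z|}$.

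The first step is the reduction. Because the instance is untangled, every arc inside $V(G)\setminus Z$ carries $1_\Sigma$, so for any feasible solution $X\subseteq V(G)\setminus Z$ the consistent labeling $\lambda$ guaranteed by Lemma~\ref{lem:consistent} on $(G\setminus X,\Lambda)$ is constant on each connected component of $G\setminus(Z\cup X)$. Suppose we have fixed a candidate labeling $\lambda:Z\to\Sigma$. Then every vertex $u\in V(G)\setminus Z$ with a neighbor $z\in Z$ inherits a \emph{demanded} group element $\lambda(z)\cdot\Lambda((z,u))$; a vertex with two conflicting demands must be put into $X$, and after removing such forced vertices the task reduces to finding a set of at most $k$ vertices in $G[V(G)\setminus Z]$ that pairwise separates vertices with different demands. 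This is precisely Node Multiway Cut, solvable in $\Ohstar(2^k)$ time and polynomial space by the algorithm of Chen--Liu--Lu (or Cygan et al.). One must additionally verify that the resulting set together with the forced vertices indeed destroys all non-null cycles; this follows from Lemma~\ref{lem:consistent} applied to the putative consistent labeling on $G\setminus X$ obtained by extending $\lambda$ via the demanded values.

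The second and main step is to trim the set of candidate labelings. I would apply a sequence of reduction rules, in the spirit of those in the \sfvs algorithm of \cite{sfvs}, to normalize the interface between $Z$ and the components of $G\setminus Z$: for instance, if some $z\in Z$ has two arcs to the same component of $G\setminus Z$ whose induced demanded values disagree, then either a vertex of that component is forced into $X$ or, via a Gallai-type argument on disjoint paths, we can reduce the instance. After such rules have been exhausted, I expect to prove that in any minimum solution $X$ and corresponding consistent labeling $\lambda$, the value $\lambda(z)$ is of the form $\lambda_C\cdot\Lambda((u,z))$ where $C$ is a component of $G\setminus(Z\cup X)$, $u\in C$ is a neighbor of $z$, and $\lambda_C$ is determined inductively by other vertices of $Z$ connected to $C$. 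Since $|X|\le k$ and $|Z|\le k+1$, there are only $\mathrm{poly}(k,|Z|)$ components of $G\setminus(Z\cup X)$ that can serve as anchors for the label of a given $z$, which gives $\mathrm{poly}(k,|Z|)$ candidates for each $\lambda(z)$ and $(k|Z|)^{O(|Z|)}$ candidate labelings in total. Combined with the $\Ohstar(2^k)$ Multiway Cut subroutine, this matches the claimed running time.

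The hardest part will be the combinatorial analysis in the second step: formalizing the reduction rules so that they simultaneously preserve the answer, shrink the instance or directly bound $|\{\lambda(z):z\in Z\}|$, and admit a clean description of the ``anchors'' from which the label at each $z$ is computed. The subtlety is that $\Sigma$ is given only through an oracle and can be non-abelian, so order of multiplication has to be tracked carefully along paths and cycles when arguing that distinct candidate labelings differ on actually meaningful group elements rather than on labels that are combinatorially equivalent for the Multiway Cut reduction.
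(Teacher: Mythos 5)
Your overall architecture --- enumerate a small family of candidate labelings of $Z$ and, for each, reduce to \mwc{} solvable in $\Ohstar(2^k)$ --- is exactly the paper's. Your first step (the reduction to \mwc{} for a fixed $\phi$ on $Z$, using untangledness so that each surviving component of $G\setminus(Z\cup X)$ carries a constant label) is correct and coincides with the paper's Lemma~\ref{lem:fixed}.

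The gap is in the second step, which is precisely where the whole difficulty of the lemma sits, and your substitute for it does not hold. You claim that ``there are only $\mathrm{poly}(k,|Z|)$ components of $G\setminus(Z\cup X)$ that can serve as anchors for the label of a given $z$,'' but the number of components of $G\setminus(Z\cup X)$ adjacent to a given $z\in Z$ (or even to a given pair $z_1,z_2$) is not bounded by any function of $k$ and $|Z|$; each such component, and indeed each arc from $z$ into it, can a priori contribute a distinct candidate value $\lambda_C\cdot\Lambda((u,z))$. What actually needs to be bounded is $|\Sigma(z_1,z_2)|$, the number of distinct group elements realized by external $z_1$--$z_2$ paths, and this requires a concrete argument that you only gesture at (``via a Gallai-type argument \ldots I expect to prove''). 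The paper gets this bound in two moves: first a reduction rule (Lemma~\ref{red:1}) forcing into the solution any $v\notin Z$ with at least $k+2$ paths to $\Sigma_z$ in the flow graph $F(G,\Lambda,Z,z)$ that are vertex-disjoint apart from $v$ (two surviving such paths would form a non-null cycle); then a pigeonhole argument (Lemma~\ref{lem:many_paths}) showing that once this rule is exhausted, $|\Sigma(z_1,z_2)|\ge k^3(k+1)^2+2$ forces a NO answer. Only with the resulting $O(k^5)$ bound per pair can one propagate labels along a guessed spanning forest of the connectivity pattern of $Z$ in $G\setminus X$ (rooting each tree at $1_\Sigma$) to get $2^{O(|Z|(\log k+\log|Z|))}$ candidates. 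Your proposal also omits this forest-guessing step, which is needed because components of $Z$ that end up disconnected in $G\setminus X$ have independent, and individually normalizable, labels. Without the flow-graph reduction rule and the $|\Sigma(z_1,z_2)|$ bound, the enumeration you describe is not bounded by $2^{O(|Z|(\log k+\log|Z|))}$.
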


Having Lemmata~\ref{lem:relabel} and \ref{lem:untangled} we can prove Lemma~\ref{lem:compression}.

\begin{proof}[Proof of Lemma~\ref{lem:compression}]
Let $(G,\Lambda,k,Z)$ be an instance of \cgfvsshort.
Since $(G\setminus Z)$ has no non-null cycle, by Lemma~\ref{lem:consistent}
there is a consistent labeling $\lambda$ of $(G\setminus Z, \Lambda)$.

Let $\Lambda'$ be a result of untangling $\Lambda$ around each
vertex $v \in V(G)\setminus Z$ with $\lambda(v)$. 
Note that, by associativity of $\Sigma$, the order in which we untangle subsequent vertices does not matter.
After all the untangling operations, for an arc $a=(u,v) \in A(G)$,
such that $u,v \in V(G) \setminus Z$, we have $\Lambda'(a) = (\lambda(u) \cdot \Lambda(a)) \cdot \lambda(v)^{-1} = \lambda(v) \cdot \lambda(v)^{-1} = 1_\Sigma$.
Therefore, by Lemma~\ref{lem:relabel} the instance $(G,\Lambda',k,Z)$
is an untangled instance of \cgfvsshort, which is a YES-instance
iff $(G,\Lambda,k,Z)$ is a YES-instance.
Consequently, we can use Lemma~\ref{lem:untangled} and the claim follows.
\end{proof}

\subsection{Fixing a labeling on $Z$}
\label{sec:alg:3}

In this section we prove Lemma~\ref{lem:untangled} using the following lemma, which we prove in Section~\ref{sec:alg:4}.

\begin{lemma}
\label{lem:fixed}
Let $(G,\Lambda,k,Z)$ be an untangled instances of \cgfvsshort.
There is an algorithm which for a given
function $\phi:Z\rightarrow \Sigma$,
finds a set $X \subseteq V(G)\setminus Z$ of size at most $k$,
such that there exists a consistent labeling 
$\lambda:V(G)\setminus X \rightarrow \Sigma$ of $(G\setminus X,\Lambda)$,
where $\lambda|_{Z} = \phi$,
or checks that such a set $X$ does not exist; the algorithm works in $\Ohstar(2^k)$ time and uses polynomial space.
\end{lemma}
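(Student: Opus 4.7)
The plan is to reduce, for the given labeling $\phi$, the task to \mwc{} and then invoke the known deterministic $\Ohstar(2^k)$ polynomial-space algorithm for \mwcshort{}, for instance the one from \cite{nmc:2k}. As a sanity check, I would first verify that $\phi$ is consistent on $G[Z]$ in the sense of Definition~\ref{def:consistent}: if some arc $(u,v) \in A(G)$ with $u,v \in Z$ violates $\phi(v) = \phi(u)\cdot \Lambda((u,v))$, then since $X \cap Z = \emptyset$ no extension can be consistent, and I return failure.

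The key structural observation exploits the untangled property. For every $v \in V(G) \setminus Z$ define
\[
T(v) \;=\; \{\phi(z) \cdot \Lambda((z,v)) : z \in Z,\ (z,v) \in A(G)\}.
\]
Because every arc with both endpoints in $V(G) \setminus Z$ carries the label $1_\Sigma$, any consistent labeling $\lambda$ extending $\phi$ must be constant on every connected component of $G[V(G) \setminus (Z \cup X)]$, and on such a component $C$ it must equal $\phi(z) \cdot \Lambda((z,v))$ for every $v \in C$ and every arc $(z,v)$ with $z \in Z$. Thus a feasible $X$ exists if and only if one can delete at most $k$ vertices from $V(G) \setminus Z$ so that $\bigl|\bigcup_{v \in C} T(v)\bigr| \le 1$ for every connected component $C$ of $G[V(G) \setminus (Z \cup X)]$.

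Guided by this reformulation, set $X_0 = \{v \in V(G) \setminus Z : |T(v)| \ge 2\}$; every feasible $X$ must contain $X_0$, since otherwise any component meeting $X_0$ would already contain a pair of incompatible forced labels. If $|X_0| > k$ return failure; otherwise let $k' = k - |X_0|$ and consider $G_1 = G \setminus (Z \cup X_0)$, in which every vertex satisfies $|T(v)| \le 1$. Write $c(v) = g$ whenever $T(v) = \{g\}$, and let $\Gamma$ be the set of colors that actually appear. Build an auxiliary undirected graph $G^\star$ by taking the underlying undirected graph of $G_1$ and, for each $g \in \Gamma$, adding a fresh clique $K_g$ of size $k'+1$ whose every vertex is adjacent to each $v \in V(G_1)$ with $c(v) = g$; pick one vertex $t_g \in K_g$ as a terminal. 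The clique gadget guarantees that no cut of size at most $k'$ can intersect $K_g$, so the terminals $\{t_g\}_{g \in \Gamma}$ are effectively undeletable. A set $X_1 \subseteq V(G_1)$ of size at most $k'$ pairwise separates $\{t_g\}_{g \in \Gamma}$ in $G^\star \setminus X_1$ if and only if in each connected component of $G_1 \setminus X_1$ at most one value $c(v)$ occurs, which matches exactly the condition distilled above.

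It then suffices to run the deterministic $\Ohstar(2^{k'}) = \Ohstar(2^k)$ polynomial-space \mwcshort{} algorithm on $G^\star$ with terminals $\{t_g\}_{g \in \Gamma}$ and budget $k'$. If it returns a cut $X_1$, I output $X := X_0 \cup X_1$ together with the labeling $\lambda$ defined by $\lambda|_Z = \phi$ and, on each component $C$ of $G[V(G) \setminus (Z \cup X)]$, the unique element of $\bigcup_{v \in C} T(v)$ when this set is nonempty and an arbitrary element of $\Sigma$ otherwise; if no cut is found, I return failure. The only real obstacle is verifying the equivalence between the original labeling-extension task and the constructed \mwcshort{} instance; untangledness is doing the heavy lifting by collapsing arcs inside $V(G) \setminus Z$, and once that reduction is established both the running time and the polynomial space bound are inherited from the cited \mwcshort{} algorithm.
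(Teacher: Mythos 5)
Your reduction is correct and follows essentially the same route as the paper: terminals correspond to the candidate label values $\phi(z)\cdot\Lambda((z,v))$, each vertex of $V(G)\setminus Z$ is attached to the terminals forcing its label, and untangledness makes a consistent extension of $\phi$ equivalent to a multiway cut of size at most $k$. The only differences are cosmetic: the \mwc{} formulation used in the paper already forbids deleting terminals, so your size-$(k'+1)$ clique gadgets and the explicit preprocessing of vertices with $|T(v)|\ge 2$ are both redundant (a vertex adjacent to two terminals is forced into the cut automatically), but harmless.
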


We could try all $(|\Sigma|+1)^{|Z|}$
possible assignments $\phi$ and use the algorithm from Lemma~\ref{lem:fixed}.
Unfortunately, since $|\Sigma|$ is not our parameter we cannot 
iterate over all such assignments. Therefore, the goal of
this section is to show that after some preprocessing,
it is enough to consider only $2^{O(|Z|(\log k + \log |Z|))}$ assignments $\phi$; together with Lemma~\ref{lem:fixed} this suffices to prove Lemma~\ref{lem:untangled}.

\begin{definition}
Let $(G,\Lambda,k,Z)$ be an untangled instance of \cgfvsshort,
let $z$ be a vertex in $Z$ and by $\Sigma_z$ denote 
the set $\Lambda(\{(z,v) \in A(G) : v \in V(G) \setminus Z\})$.

By a flow graph $F(G,\Lambda,Z,z)$, we denote the undirected
graph $(V',E')$, where $V' = (V(G) \setminus Z) \cup \Sigma_z$
and $E' = \{uv : (u,v) \in A(G[V(G) \setminus Z])\} \cup \{gv : (z,v) \in A(G), v \in V(G) \setminus Z, \Lambda((z,v)) = g\}$.
\end{definition}

Less formally, in the flow graph we take the underlying undirected graph of $G[V(G) \setminus Z]$
and add a vertex for each group element $g \in \Sigma_z$, 
that is a group element for which there exists an arc from $z$ to $V(G) \setminus Z$ 
labeled with $g$ by $\Lambda$.
A vertex $g \in \Sigma_z$ is adjacent to all the vertices of $V(G) \setminus Z$
for which there exists an arc going from $z$, labeled with $g$ by $\Lambda$.

\begin{lemma}
\label{red:1}
Let $(G,\Lambda,k,Z)$ be an untangled instance of \cgfvsshort.
Let $H$ be the flow graph $F(G,\Lambda,Z,z)$ for some $z \in Z$.
If for some vertex $v \in V(G)\setminus Z$, 
in $H$ there are at least $k+2$ paths from $v$ to $\Sigma_z$ that are vertex disjoint apart from $v$,
then $v$ belongs to every solution of \cgfvsshort.
\end{lemma}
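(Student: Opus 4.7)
The plan is to argue by contradiction. Assume there is a solution $X \subseteq V(G)\setminus Z$ with $|X| \le k$ and $v \notin X$; the goal is then to exhibit a non-null cycle through $z$ in $(G \setminus X, \Lambda)$ and obtain a contradiction.

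First I would use the $k+2$ internally-disjoint paths to find two that survive $X$. Since the paths share only the vertex $v$, every other vertex on them lies on a unique path. Moreover $X \subseteq V(G)\setminus Z$, so no vertex of $X$ coincides with one of the added $\Sigma_z$-vertices, and $v \notin X$ by assumption. Hence each vertex of $X$ destroys at most one of the $k+2$ paths, leaving at least two paths $P_1, P_2$ whose vertex sets avoid $X$ entirely. Let $g_1, g_2 \in \Sigma_z$ be the $\Sigma_z$-endpoints of these two paths; disjointness in the flow graph forces $g_1 \ne g_2$.

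Next I would translate $P_1$ and $P_2$ back to $G$. Stripping the $\Sigma_z$-endpoint of each $P_i$ yields an undirected walk $Q_i$ in $G[V(G)\setminus Z]$ from $v$ to some $u_i \in V(G)\setminus Z$, and the removed last edge corresponds to an arc $(z, u_i)$ labeled $g_i$ in $G$. Concatenating the pieces --- take the arc $z \to u_1$, walk $Q_1$ in reverse from $u_1$ back to $v$, walk $Q_2$ from $v$ to $u_2$, and close with the arc $u_2 \to z$ --- produces a closed walk $C$ in $G$. Because the instance is untangled, every edge of $Q_1 \cup Q_2$ carries label $1_\Sigma$, so $\Lambda(C) = g_1 \cdot g_2^{-1} \ne 1_\Sigma$; and $C$ avoids $X$ since $z \in Z$, $v \notin X$, and all remaining vertices lie on $P_1$ or $P_2$.

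The main thing to verify is that $C$ is a genuine simple cycle and not a degenerate walk. Interior simplicity follows from $P_1, P_2$ sharing only $v$ in $H$ and from $z$ not being in $V(G)\setminus Z$. The one slightly delicate case is when both $Q_i$ are trivial --- i.e.\ $u_1 = u_2 = v$ and both $P_i$ have length $1$ --- which would require two parallel arcs $(z, v)$ with distinct labels $g_1 \ne g_2$; this is ruled out by the fact that in a simple $\Sigma$-labeled graph there is at most one arc from $z$ to $v$. Once simplicity is settled, $C$ is a non-null cycle in $(G \setminus X, \Lambda)$, contradicting that $X$ is a solution, and hence proving that $v$ must belong to every solution. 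The untangledness hypothesis is essential in this argument: without it, the labels along $Q_1 \cup Q_2$ would not telescope to $1_\Sigma$ and we would lose all control over $\Lambda(C)$.
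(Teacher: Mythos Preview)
Your proof is correct and follows exactly the same approach as the paper's: assume $v \notin X$, observe that at least two of the $k+2$ paths survive deletion of $X$, and combine them into a non-null cycle through $z$. You supply considerably more detail than the paper (which compresses the whole argument into three sentences), including the explicit label computation $\Lambda(C) = g_1 g_2^{-1}$ via untangledness and the verification that $C$ is simple; the paper simply asserts that the two surviving paths ``correspond to a non-null cycle in $G\setminus X$''.
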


\begin{proof}
Let us assume, that $v$ is not a part of a solution $X \subseteq V(G)\setminus Z$,
where $|X| \le k$.
Then there at least $2$ out of the $k+2$ paths from $v$ to $\Sigma_z$ remain in $H \setminus X$.
These two paths are vertex disjoint apart from $v$, so they correspond to a non-null cycle in $G\setminus X$, a contradiction.
\end{proof}

\begin{definition}
For an untangled instance $(G,\Lambda,k,Z)$ of \cgfvsshort by an {\em external path}
we denote any path $P$ beginning and ending in $Z$, but with all internal vertices belonging to $V(G)\setminus Z$.
Moreover, for two distinct vertices $z_1,z_2 \in Z$ by $\Sigma(z_1,z_2)$ we denote the set of all elements $g \in \Sigma$,
for which there exists an external path $P$ from $z_1$ to $z_2$ with $\Lambda(P)=g$.
\end{definition}

\begin{lemma}
\label{lem:many_paths}
Let $(G,\Lambda,k,Z)$ be an untangled instance of \cgfvsshort.
If for each $z \in Z$ and $v \in V(G) \setminus Z$
there are at most $k+1$ vertex disjoint paths from $v$
to $\Sigma_z$ in $F(G,\Lambda,Z,z)$
and for some $z_1,z_2 \in Z$, $z_1 \neq z_2$, we have $|\Sigma(z_1,z_2)| \ge k^3(k+1)^2+2$, 
then there is no solution for $(G,\Lambda,k,Z)$.
\end{lemma}

\begin{proof}
Let us assume that $X \subseteq V(G)\setminus Z$ is a solution for $(G,\Lambda,k,Z)$.
Let $\mathcal{P}$ be a set of external paths from $z_1$ to $z_2$,
containing exactly one path $P$ for each $g \in \Sigma(z_1,z_2)$ with $\Lambda(P)=g$.
Note that the only arcs with non-null labels in $P$ are possibly the first and the last arc.

By the pigeon-hole principle, there exists a vertex $v \in X$,
which belongs to at least $k^2(k+1)^2+1$ paths in $\mathcal{P}$,
since otherwise there would be at least two paths in $\mathcal{P}$
disjoint with $X$, creating a non-null cycle disjoint with $X$. This cycle is not necessarily simple; however, if it is non-null, then it contains a simple non-null subcycle that is also disjoint with $X$.

Consider a connected component $C$ of $G[V(G) \setminus Z]$ to which $v$ belongs.
Observe that there exists a vertex $z\in \{z_1,z_2\}$
that has at least $k(k+1)+1$ incident arcs
going to $C$ with pairwise different labels in $\Lambda$,
since otherwise $v$ would belong to at most $k^2(k+1)^2$ paths in $\mathcal{P}$.

Let $H$ be the flow graph $F(G,\Lambda,Z,z)$
and let $T \subseteq \Sigma_z$ be the set of labels of arcs going from $z$ to $C$; recall that $|T| > k(k+1)$.
Since there is no non-null cycle in $(G\setminus X,\Lambda)$, 
we infer that in $H_0 = H[C \cup T] \setminus (X\cap C)$, 
no two vertices of $T$ belong to the same connected component.
Moreover, as $C$ is connected in $G$, for each $t \in T$ there exists a path $P_t$ with endpoints $v$ and $t$ in $H[C \cup T]$.
Let $w_t$ be the closest to $t$ vertex from $X$ on the path $P_t$. As $|X| \leq k$ and $|T| > k(k+1)$,
there exists $w \in X$ such that $w = w_t$ for at least $k+2$ elements $t \in T$.
By the definition of the vertices $w_t$ and the fact that there are no two vertices of $T$ in the same connected component of $H_0$,
the subpaths of $P_t$ from $t$ to $w_t$ for all $t$ with $w=w_t$ are vertex disjoint apart from $w$.
As there are at least $k+2$ of them, we have a contradiction.
\end{proof}

We are now ready to prove Lemma~\ref{lem:untangled} given Lemma \ref{lem:fixed}.

\begin{proof}[Proof of Lemma~\ref{lem:untangled}]
If there exists a vertex $v$, satisfying the properties of Lemma~\ref{red:1},
we can assume that it has to be a part of the solution; therefore,
we can remove the vertex from the graph and solve the problem
for decremented parameter value.
Hence, we assume that for each $z \in Z$ and $v \in V(G) \setminus Z$,
there are at most $k+1$ vertex disjoint paths from $v$ to $\Sigma_z$ in $F(G,\Lambda,Z,z)$.
We note that one can compute the number of such vertex disjoint paths in polynomial time, using 
a maximum flow algorithm.

By Lemma~\ref{lem:many_paths}, if there is a pair of vertices $z_1,z_2 \in Z$ with $|\Sigma(z_1,z_2)| \ge k^3(k+1)^2+2$, 
we know that there is no solution.
Observe, that one can easily verify the cardinality of $\Sigma(z_1,z_2)$, since the only
non-null label arcs on paths contributing to $\Sigma(z_1,z_2)$ are the first and the last one,
and we can iterate over all such arcs and check whether their endpoints are in the same connected component in $G[V(G) \setminus Z]$.
Clearly, this can be done in polynomial time.

Knowing that the sets $\Sigma(z_1,z_2)$ have sizes bounded by a function of $k$,
we can enumerate all the reasonable labelings of $Z$.
For the sake of analysis let $G'=(Z,E')$ be an auxiliary undirected graph,
where two vertices of $Z$ are adjacent, when they are connected by an external path
in $G\setminus X$, for some fixed solution $X \subseteq V(G) \setminus Z$.
Let $F$ be any spanning forest of $G'$.
Since $F$ has at most $|Z|-1$ edges, we can guess $F$,
by trying at most $|Z| \cdot |Z|^{2(|Z|-1)}$ possibilities.
Let us assume, that we have guessed $F$ correctly.
Observe that for any two vertices $z_1, z_2 \in Z$,
belonging to two different connected components of $F$,
there is no path between $z_1$ and $z_2$ in $G \setminus X$.
Therefore, there exists a consistent labeling of $G \setminus X$,
which labels an arbitrary fixed vertex from each connected component of $F$ with $1_\Sigma$.
For all other vertices of $F$ we use the fact that if we have already fixed a value $\phi(z_1)$,
then for each external path corresponding to an edge $z_1z_2$ of $F$, there are at most $k^3(k+1)^2+1$ possible 
values of $\phi(z_2)$, since $\phi^{-1}(z_1) \cdot \phi(z_2) \in \Sigma(z_1,z_2)$.
Hence, we can exhaustively try $2^{O(|Z|(\log k + \log |Z|))}$ labelings $\phi$ of $Z$,
and use Lemma~\ref{lem:fixed} for each of them.
\end{proof}

\subsection{Reduction to Multiway Cut}
\label{sec:alg:4}

In this section, we prove Lemma~\ref{lem:fixed},
by a reduction to \mwc.
A similar reduction was also used recently by Kratsch and Wahlstr\"om
in the kernelization algorithm for \gfvs parameterized by $k$ with constant $|\Sigma|$~\cite{stefan:new}.
Currently the fastest FPT algorithm for \mwc is due to Cygan et al.~\cite{nmc:2k},
and it solves the problem in $\Ohstar(2^k)$ time and polynomial space.

\defproblemugoal{\mwc}{An undirected graph $G = (V,E)$, a set of terminals $T \subseteq V$, and a positive integer $k$.}{$k$}
{Find a set $X \subseteq V\setminus T$,
  such that $|X| \leq k$ and no pair of terminals from the set $T$ is contained in one connected component of the graph $G[V\setminus X]$,
  or return NO if such a set $X$ does not exist.}

\begin{proof}[Proof of Lemma~\ref{lem:fixed}]
Firstly, we check whether the given function $\phi$ satisfies $\phi(z_2) = \phi(z_1) \cdot \Lambda((z_1,z_2))$,
for each arc $(z_1,z_2) \in G[Z]$, since otherwise there is no set $X$ we are looking for.

Given a $\Sigma$-labeled graph $(G,\Lambda)$,
a set $Z$, an integer $k$, and a function $\phi:Z \rightarrow \Sigma$,
we create an undirected graph $G'=(V,E)$.
As the vertex set, we set $V=(V(G) \setminus Z) \cup T$ and $T=\{g : (u,v) \in A(G),\ u\in Z,\ v \in V(G) \setminus Z,\ \phi(u) \cdot \Lambda((u,v)) = g\}$.
Note that in the set $T$ there exactly these elements of $\Sigma$,
which are potential values of a consistent labeling
of $(G,\Lambda)$ that matches $\phi$ on $Z$.
As the edge set, we set $E=\{uv : (u,v) \in A(G[V(G) \setminus Z])\} \cup \{gv: (u,v) \in A(G),\ u\in Z,\ v\in V(G) \setminus Z,\ \phi(u) \cdot \Lambda((u,v))=g\}$.
We show that $(G',T,k)$ is a YES-instance of \mwc iff 
there exists a set $X\subseteq V(G)\setminus Z$, such that there exists a consistent labeling $\lambda$ of $(G\setminus X, \Lambda)$ with $\lambda|_Z=\phi$.

Let $X$ be solution for $(G',T,k)$. We define a consistent labeling $\lambda$ of $(G\setminus X,\Lambda)$.
For $v \in Z$ we set $\lambda(v) = \phi(v)$. 
For $v \in (V(G) \setminus Z) \setminus X$, if $v$ is reachable from a terminal $g \in T$ in $G'\setminus X$, we set $\lambda(v) = g$.
If $v \in (V(G) \setminus Z) \setminus X$ is not reachable from any terminal in $G'$, we set $\lambda(v) = 1_\Sigma$.
Since each arc in $A(G[V(G) \setminus Z])$ is labeled $1_\Sigma$ by $\Lambda$, 
and each vertex in $V(G) \setminus Z$ is reachable from at most one terminal in $G'\setminus X$,
$\lambda$ is a consistent labeling of $(G\setminus X,\Lambda)$.

Let $X\subseteq V(G)\setminus Z$ be a set of vertices of $G$, $|X| \le k$,
such that there is a consistent labeling $\lambda$ of $(G\setminus X,\Lambda)$,
where $\lambda|_Z=\phi$.
By the definition of edges between $T$ and $V(G) \setminus Z$ in $G'$,
each vertex of $V(G) \setminus Z$ is reachable from at most one terminal in $G'$,
since otherwise $\lambda$ would not be a consistent labeling of $(G\setminus X,\lambda)$.
Therefore, $X$ is a solution for $(G',T,k)$.

We can now apply the algorithm for \mwc of~\cite{nmc:2k} to the instance $(G',T,k)$ in order to conclude the proof.
\end{proof}

\section{Conclusions and open problems}

We have shown a relatively simple fixed-parameter algorithm
for \gfvs{} running in time $\Ohstar(2^{O(k \log k)})$. Our algorithm
works even in a robust oracle model, that allows us to generalize
the recent algorithm for \sfvs{} \cite{sfvs} within the same complexity bound.

We would like to note that if we represent group elements
by strings consisting $g$ and $g^{-1}$ for $g \in \Lambda(A(G))$ (formally, we perform the computations in the free group over generators corresponding to the arcs of the graph), then after slight modifications
of our algorithm we can solve the \gfvs problem
even for infinite groups for which the word problem, i.e., the problem of checking whether results of two sequences of multiplications are equal, is polynomial-time solvable. The lengths of representations of group elements created during the computation can be bounded linearly in the size of the input graph. Therefore, if a group admits a polynomial-time algorithm solving the word problem, then we can use this algorithm as the oracle.

Both our algorithm and the algorithm for \sfvs{} of \cite{sfvs} seems
hard to speed up to time complexity $\Ohstar(2^{O(k)})$.
Can these problems be solved in $\Ohstar(2^{O(k)})$ time, or can we prove that such a result
would violate Exponential Time Hypothesis?

\vskip 0.5cm

\noindent{\bf{Acknowledgements.}} We thank Stefan Kratsch and Magnus Wahlstr\"om for inspiring discussions on graph separation problems and for drawing
our attention to the \gfvs{} problem.

\bibliographystyle{plain}
\bibliography{group-fvs}

\newpage

\end{document}